\definecolor{light-gray}{gray}{0.95}
\newtcbox{\mymath}[1][]{%
    nobeforeafter, math upper, tcbox raise base,
    enhanced, colframe=blue!30!black,
    colback=blue!30, boxrule=1pt,
    #1}
\newtheorem{theorem}{Theorem}[section]
\newtheorem{lemma}[theorem]{Lemma}
\newtheorem{corollary}[theorem]{Corollary}
\theoremstyle{definition}
\newcommand*{\rom}[1]{\expandafter\@slowromancap\romannumeral #1@}
\newcommand\myeq{\mathrel{\stackrel{\makebox[0pt]{\mbox{\normalfont\tiny def}}}{=}}}
\newcommand{\squishlist}{
 \begin{list}{$\bullet$}
  {  \setlength{\itemsep}{0pt}
     \setlength{\parsep}{3pt}
     \setlength{\topsep}{3pt}
     \setlength{\partopsep}{0pt}
     \setlength{\leftmargin}{2em}
     \setlength{\labelwidth}{1.5em}
     \setlength{\labelsep}{0.5em}
} }
\newcommand{\squishlisttight}{
 \begin{list}{$\bullet$}
  { \setlength{\itemsep}{0pt}
    \setlength{\parsep}{0pt}
    \setlength{\topsep}{0pt}
    \setlength{\partopsep}{0pt}
    \setlength{\leftmargin}{2em}
    \setlength{\labelwidth}{1.5em}
    \setlength{\labelsep}{0.5em}
} }
\newcommand{\squishdesc}{
 \begin{list}{}
  {  \setlength{\itemsep}{0pt}
     \setlength{\parsep}{3pt}
     \setlength{\topsep}{3pt}
     \setlength{\partopsep}{0pt}
     \setlength{\leftmargin}{1em}
     \setlength{\labelwidth}{1.5em}
     \setlength{\labelsep}{0.5em}
} }
\newcommand{\squishend}{
  \end{list}
}
\newcommand{\squishlistt}{
 \begin{list}{---}
  {  \setlength{\itemsep}{0pt}
     \setlength{\parsep}{3pt}
     \setlength{\topsep}{3pt}
     \setlength{\partopsep}{0pt}
     \setlength{\leftmargin}{2em}
     \setlength{\labelwidth}{1.5em}
     \setlength{\labelsep}{0.5em}
} }
\newcommand{\randG}{\textbf{g}}
\newcommand{\Alg}{\mathcal{A}}	
\newcommand{\eps}{\varepsilon}
\newcommand{\randF}{\textbf{f}}
\DeclareMathOperator*{\supp}{supp}
\newcommand{\hide}[1]{} 
\newcommand{\beql}[1]{\begin{equation}\label{#1}}
\newcommand{\beq}[1]{\begin{equation}\label{#1}}
\newcommand{\eeq}{\end{equation}}
\newcommand{\Prob}[1]{\ensuremath{{\bf{Pr}}\left[{#1}\right]}}
\newcommand{\Mean}[1]{\ensuremath{{\mathbb E}\left[{#1}\right]}}
\newcommand{\whp}{\textit{whp}\xspace}
\newcommand{\spara}[1]{\smallskip\noindent{\bf #1}}
\begin{document}

\title{Optimal Learning of Joint Alignments with a Faulty Oracle}

\author{Kasper Green Larsen\thanks{Aarhus University,  {\texttt{larsen@cs.au.dk}}.}        
\and    
Michael Mitzenmacher
\thanks{Harvard  University,  {\texttt{michaelm@eecs.harvard.edu}}.}
\and
Charalampos E. Tsourakakis\thanks{Boston University,  {\texttt{ctsourak@bu.edu}}.}
}

\maketitle

\begin{abstract}
We consider the following problem, which is useful in applications such as joint image and shape alignment.  The goal is to recover $n$
discrete variables $g_i \in \{0, \ldots, k-1\}$ (up to some global offset) given noisy observations of a set of their pairwise
differences $\{(g_i - g_j) \bmod k\}$; specifically, with probability $\frac{1}{k}+\delta$ for some $\delta > 0$ one obtains the correct answer, and with the remaining probability one obtains a uniformly random incorrect answer.  We consider a learning-based formulation where one can perform a query to observe a pairwise difference, and the goal is to perform as few queries as possible while obtaining the exact joint alignment.  We provide an easy-to-implement, time efficient algorithm that performs $O\big(\frac{n \lg n}{k \delta^2}\big)$ queries, and recovers the joint alignment with high probability.  We also show that our algorithm is
optimal by proving a general lower bound that holds for all non-adaptive algorithms. Our work improves significantly recent work
by Chen and Cand\'{e}s \cite{chen2016projected}, who view the problem as a constrained principal components analysis problem that can be solved using the power method. Specifically, our approach is simpler both in the algorithm and the analysis, and provides additional insights into the problem structure. 
\end{abstract}

\pagenumbering{gobble}

\vfill

\pagebreak

\pagenumbering{arabic}
 
\section{Introduction}  
\label{sec:intro} 
Learning a joint alignment from pairwise differences is a problem with various important applications  ranging from shape matching \cite{huang2013fine}, to spectroscopy imaging \cite{wang2013exact}. In this work we adopt the following established mathematical formalization of this problem. There exists a set $V=[n]$ of $n$ discrete items, and an assignment $g:V \rightarrow [k]$ according to which each  item   is assigned one out of $k$ possible labels.   To give an example, imagine a set of $n$ images of the same object in $k$ possible orientations/angles, where each $g(i)$ is one of $k$ possible orientations (angles) of the camera  when taking the $i$-th image. Recovering $g$ would allow a better understanding of the 3-dimensional structure of the object.  The assignment function $g$ is unknown, but we may obtain a set of pairwise noisy difference samples $\{\tilde{f}(i,j) \myeq  (g(i)-g(j)+ \text{noise}) \bmod k\}_{(i,j) \in \Omega}$ where $\Omega \subseteq {[n] \choose 2}$ is a symmetric index set, i.e., a set of pairs $\{i,j\}$ with $i < j$.  In this work, we consider the setting where each pair can be queried at most once (e.g., the measurement will not change on repeated queries), and the noisy measurement $\tilde{f}(x,y) $ is equal to 
\begin{equation} 
\label{eq:model}
    \tilde{f}(x,y) =  
        \big(g(x)-g(y)+ \eta_{xy}\big) \bmod k
\end{equation} 
\noindent where the additive noise values $\eta_{xy} $ are
i.i.d.~random variables supported on $\{0,1,\cdots,k-1\}$, with
the following probability distribution that is slightly biased
towards zero for some parameter $\delta > 0$:
\begin{equation} 
  \label{eq:noise}
  \Prob{\eta_{xy}=i} = \left\{\begin{array}{lr}
        \frac{1}{k}+\delta, & \text{if }i=0;\\
        \frac{1}{k}-\frac{\delta}{k-1}, & \text{for each }i\neq 0.\\
         \end{array}\right.
\end{equation} 

In this work we study the problem of recovering $g$ up to some global offset by choosing the set of queries $\Omega$.

\spara{Related Work.}  Learning joint alignments is a major problem that appears in numerous settings under different guises. In cryo-electron microscopy, the problem corresponds to recovering the angles from which 2d pictures of a 3d object were taken. This allows for the construction of a 3d model of the objective \cite{shkolnisky2012viewing}.  In shape matching, a key problem is assembling fractured surfaces \cite{huang2006reassembling} and fusing scans to model reality \cite{huber2002automatic}, jointly optimizing the maps between shapes improves the performance compared to matching shapes in isolation \cite{huang2013consistent}.

Closest to our work lies the work of Chen and Cand\`{e}s \cite{chen2016projected}, who study the same model (Equation~\eqref{eq:model}\footnote{The parameter $\pi_0$ in their random corruption model, and our bias $\delta$ are connected with the following equation $\delta= \pi_0 \frac{k-1}{k}$.}).  They provide an algorithm that is non-adaptive, and the underlying queries form a random binomial graph, i.e. each edge as queried independently with a fixed probability. They show that, in the setting where queries form a random binomial graph, the minimax probability of error tends to 1 if the number of queries is less than $\Omega\left (\frac{n \log n}{k \delta^2} \right )$ \cite[Theorem 2,p. 7]{chen2016projected}. Their algorithm, based on the projected power method, has a required number of queries that matches the lower bound.  Inferior results have been obtained in the past as well. Notably, a simpler non-adaptive algorithm with somewhat inferior query complexity that relies on simple breadth-first search was proposed by Mitzenmacher and Tsourakakis \cite{mitzenmacher2018joint}. Chen et al. provide an SDP-based algorithm  \cite{chen2014near} that is slower and with more stringent recovery conditions than \cite{chen2016projected}.   A closely related but different approach with respect to the mathematical formulation is the  phase/angular synchronization problem \cite{singer2011angular,zhong2018near}. It is worth remarking that the special case $k=2$ reduces to an active learning problem related to graph partitioning problem that is well-studied, e.g. \cite{mazumdar2017clustering,tsourakakis2017predicting}, with close connections to the classic planted partition problem \cite{abbe2015community,hajek2016achieving,mcsherry2001spectral,tsourakakis2015streaming}.  

\spara{Our Results.}  In this paper we provide a simpler non-adaptive algorithm that we prove also succeeds with high probability with $O\left (\frac{n \log n}{k \delta^2} \right )$ queries.  Our algorithm is based on selecting a small seed set and using queries to obtain and reconcile all edge measurements for edges adjacent to these vertices;  this approach itself appears of interest.  We also provide a simpler lower bound argument showing our result is tight in terms of the number of queries required in this more general setting where queries are arbitrary.

\section{Proposed Method} 
\label{sec:prop} 
\subsection{Preliminaries}
\label{sec:prelim}

Both our algorithm and our lower bound proof need tight concentration inequalities on the probability that the majority of a collection of biases $\eta_{xy}$ is equal to $0$. We state the two concentration inequalities here.
\hide{
 The proofs are provided in the supplementary material. 
}
The proofs are in Section~\ref{sec:concentration}.
The first lemma considers the case of small $\delta$:
\begin{lemma}
\label{lem:smalldelta}
Let $k \geq 2$ be an integer, let $0 \leq \delta \leq 1/2k$ and let $X_1,\dots,X_n$ be i.i.d. random variables such that each $X_i$ takes the value $1$ with probability $1/k + \delta$, the value $-1$ with probability $1/k-\delta/(k-1)$ and the value $0$ otherwise. There exists constants $c_1,c_2> 0$ such that:
$$
\Pr[\sum_i X_i \leq 0] \leq c_1 \exp(-\delta^2 nk/c_1)
$$
and
$$
\Pr[\sum_i X_i \leq 0] \geq c_2^{-1} \exp(-\delta^2 nk c_2).
$$
\end{lemma}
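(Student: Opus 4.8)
The plan is to establish the two bounds separately, both by the exponential--moment method; the lower bound is the substantive direction and is obtained by an exponential change of measure to a \emph{symmetric} reference law, after which the required anti-concentration follows from Chebyshev's inequality and symmetry alone. Throughout write $p_{+}=\tfrac1k+\delta$, $p_{-}=\tfrac1k-\tfrac{\delta}{k-1}$, $p_{0}=1-p_{+}-p_{-}$, $S=\sum_{i}X_{i}$, $\mu=\mathbb{E}[X_{1}]=p_{+}-p_{-}=\tfrac{k\delta}{k-1}$, and $\sigma^{2}=\mathbb{E}[X_{1}^{2}]=p_{+}+p_{-}$; for $k\ge 2$ and $0\le\delta\le\tfrac1{2k}$ one checks the elementary estimates $\delta\le\mu\le 2\delta$ and $\tfrac2k\le\sigma^{2}\le\tfrac5{2k}$. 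For the \textbf{upper bound}, fix $t\in(0,1]$; then $\Pr[S\le 0]\le\mathbb{E}[e^{-tS}]=\mathbb{E}[e^{-tX_{1}}]^{n}$, and since $e^{y}\le 1+y+y^{2}$ for $|y|\le 1$ (and $t|X_{1}|\le 1$) we get $\mathbb{E}[e^{-tX_{1}}]\le 1-t\mu+t^{2}\sigma^{2}\le e^{-t\mu+t^{2}\sigma^{2}}$. Choosing $t=\mu/(2\sigma^{2})\le\tfrac{\delta k}{2}\le 1$ gives $\mathbb{E}[e^{-tX_{1}}]\le e^{-\mu^{2}/(4\sigma^{2})}\le e^{-\delta^{2}k/10}$, so $\Pr[S\le 0]\le e^{-n\delta^{2}k/10}$, which is the claimed bound with $c_{1}=10$.

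For the \textbf{lower bound}, let $Q$ be the law of $X_{1}$, set $\theta^{\ast}=\tfrac12\ln(p_{+}/p_{-})\ge 0$, and let $\tilde Q$ be the tilted law $\tilde Q(x)\propto Q(x)e^{-\theta^{\ast}x}$, with normalizer $M=p_{+}e^{-\theta^{\ast}}+p_{-}e^{\theta^{\ast}}+p_{0}=2\sqrt{p_{+}p_{-}}+p_{0}=1-(\sqrt{p_{+}}-\sqrt{p_{-}})^{2}$. The point of this particular $\theta^{\ast}$ is that $\tilde Q$ is symmetric: $\tilde Q(1)=\tilde Q(-1)=\sqrt{p_{+}p_{-}}/M$ and $\tilde Q(0)=p_{0}/M$. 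Since $\tfrac{dQ^{\otimes n}}{d\tilde Q^{\otimes n}}(x_{1},\dots,x_{n})=M^{n}e^{\theta^{\ast}(x_{1}+\dots+x_{n})}$, changing measure gives the identity
\[
\Pr[S\le 0]\;=\;M^{n}\,\mathbb{E}_{\tilde Q^{\otimes n}}\!\big[\mathbf{1}(S\le 0)\,e^{\theta^{\ast}S}\big].
\]
From $(\sqrt{p_{+}}-\sqrt{p_{-}})^{2}=\frac{(p_{+}-p_{-})^{2}}{(\sqrt{p_{+}}+\sqrt{p_{-}})^{2}}\le\frac{\mu^{2}}{p_{+}+p_{-}}\le 2\delta^{2}k\le\tfrac1{2k}$ and the inequality $1-u\ge e^{-2u}$ on $[0,\tfrac12]$ we get $M\ge e^{-4\delta^{2}k}$, hence $M^{n}\ge e^{-4n\delta^{2}k}$; moreover $\theta^{\ast}\le\tfrac12\ln(1+4\delta k)\le 2\delta k\le 1$.

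It remains to bound the tilted expectation from below. Under $\tilde Q^{\otimes n}$, $S$ is a sum of i.i.d.\ symmetric variables, hence symmetric, with variance $n\tilde\sigma^{2}$ where $\tilde\sigma^{2}=2\sqrt{p_{+}p_{-}}/M\le\tfrac{10}{3k}$. By Chebyshev, $\Pr_{\tilde Q^{\otimes n}}[\,|S|>2\tilde\sigma\sqrt n\,]\le\tfrac14$, and by symmetry $\Pr_{\tilde Q^{\otimes n}}[-2\tilde\sigma\sqrt n\le S\le 0]\ge\tfrac12\,\Pr_{\tilde Q^{\otimes n}}[\,|S|\le 2\tilde\sigma\sqrt n\,]\ge\tfrac38$. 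Restricting the expectation above to this event, on which $e^{\theta^{\ast}S}\ge e^{-2\theta^{\ast}\tilde\sigma\sqrt n}$, and using $2\theta^{\ast}\tilde\sigma\sqrt n\le 8\delta\sqrt{nk}\le 8(1+\delta^{2}nk)$ (valid because $u\le 1+u^{2}$), we obtain
\[
\Pr[S\le 0]\;\ge\;M^{n}\,e^{-2\theta^{\ast}\tilde\sigma\sqrt n}\cdot\tfrac38\;\ge\;\tfrac38\,e^{-8}\,e^{-12\,n\delta^{2}k},
\]
the claimed bound with $c_{2}=\max\!\big(12,\tfrac83 e^{8}\big)$. (Alternatively one can condition on the number $m$ of nonzero $X_{i}$ and reduce the lower bound to a reverse Chernoff estimate for $\mathrm{Bin}\!\big(m,\tfrac{p_{+}}{p_{+}+p_{-}}\big)$, but that route needs a careful Stirling computation to dispose of the $1/\sqrt m$ factor, so the change of measure is cleaner.)

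The main obstacle is exactly the lower bound: the change-of-measure identity controls $\Pr[S\le 0]$ by $M^{n}$ times a tilted expectation, and bounding that expectation crudely over the whole half-line $\{S\le0\}$ can be exponentially too small. The resolution is to anti-concentrate on the \emph{standard-deviation-scale} window $[-2\tilde\sigma\sqrt n,\,0]$ rather than on the width-$O(1/\theta^{\ast})$ window the tilt naively suggests: Chebyshev together with the symmetry of $\tilde Q$ give this window $\Omega(1)$ mass with no appeal to a local limit theorem, and the only price, the factor $e^{-2\theta^{\ast}\tilde\sigma\sqrt n}=e^{-\Theta(\delta\sqrt{nk})}$, is harmless because $\delta\sqrt{nk}\le 1+\delta^{2}nk$. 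Everything else is routine bookkeeping with the inequalities $e^{y}\le 1+y+y^{2}$ $(|y|\le1)$, $1-u\ge e^{-2u}$ $(0\le u\le\tfrac12)$, $u\le 1+u^{2}$, and with the displayed estimates on $\mu,\sigma^{2},\theta^{\ast},M,\tilde\sigma$ in the regime $k\ge2$, $\delta\le\tfrac1{2k}$; in particular no separate treatment of small $n$ is required, as the bound degrades gracefully when $n$ is tiny.
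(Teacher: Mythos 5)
Your proof is correct, and it takes a genuinely different route from the paper's. For the upper bound, the paper applies two separate Chernoff bounds to the indicator variables $Y_i=\mathbf{1}(X_i=1)$ and $Z_i=\mathbf{1}(X_i=-1)$ and a union bound, with a separate trivial case for $n\le k/2$; you bound the MGF of $X_1$ directly via $e^y\le 1+y+y^2$ and optimize the tilt parameter, which gives the claimed bound uniformly in $n$ without a case split. The two routes are of comparable difficulty; yours is a touch more compact. For the (harder) lower bound the difference is more substantial: the paper conditions on the number $W$ of nonzero $X_i$, invokes an external ``reverse Chernoff'' estimate for the resulting biased $\pm 1$ sum, and controls $W$ by Markov; you instead perform an exponential change of measure to the symmetrizing tilt $\theta^\ast=\tfrac12\ln(p_+/p_-)$ and then get $\Omega(1)$ mass on $[-2\tilde\sigma\sqrt n,0]$ from Chebyshev and symmetry alone, paying only a factor $e^{-\Theta(\delta\sqrt{nk})}$ for the tilt which you absorb via $u\le 1+u^2$. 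This buys you a fully self-contained argument (no imported anti-concentration lemma) and, again, no small-$n$ special case; the paper's route buys a shorter write-up if one is willing to cite the reverse Chernoff bound as a black box. I checked the constants: $\mu=\delta k/(k-1)\in[\delta,2\delta]$, $\sigma^2\in[2/k,5/(2k)]$, $t=\mu/(2\sigma^2)\le\delta k/2\le 1$, $M\ge e^{-4\delta^2 k}$, $\theta^\ast\le 2\delta k\le 1$, $\tilde\sigma^2\le 10/(3k)$, and $2\theta^\ast\tilde\sigma\sqrt n\le 4\delta\sqrt{10nk/3}\le 8\delta\sqrt{nk}\le 8(1+\delta^2nk)$ all hold under $k\ge 2$, $\delta\le 1/2k$, so the final bounds $\Pr[\sum_i X_i\le 0]\le e^{-\delta^2 nk/10}$ and $\Pr[\sum_i X_i\le 0]\ge \tfrac38 e^{-8}e^{-12\delta^2 nk}$ are valid with $c_1=10$ and $c_2=\tfrac83 e^8$.
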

And the second considers the case of large $\delta$:
\begin{lemma}
\label{lem:bigdelta}
Let $k \geq 2$ be an integer, let $1/2k < \delta \leq  1/4$ and let $X_1,\dots,X_n$ be i.i.d. random variables such that each $X_i$ takes the value $1$ with probability $1/k + \delta$, the value $-1$ with probability $1/k-\delta/(k-1)$ and the value $0$ otherwise. There exists constants $c_1,c_2 > 0$ such that:
$$
\Pr[\sum_i X_i \leq 0] \leq c_1 \exp(-\delta n/c_1)
$$
and
$$
\Pr[\sum_i X_i \leq 0] \geq c_2^{-1}\exp(-\delta n c_2).
$$
\end{lemma}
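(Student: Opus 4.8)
Throughout write $p:=\Pr[X_i=1]=1/k+\delta$, $q:=\Pr[X_i=-1]=1/k-\delta/(k-1)$ and $r:=\Pr[X_i=0]=1-p-q=1-2/k-\delta\frac{k-2}{k-1}$. Note that $k=2$ makes the hypothesis $1/(2k)<\delta\le 1/4$ vacuous, so we may assume $k\ge 3$, whence $r\ge 1-\tfrac23-\tfrac14=\tfrac1{12}>0$. I would prove the two bounds separately. The only place the hypothesis $\delta>1/(2k)$ enters is through $2/k<4\delta$, which together with $\frac{k-2}{k-1}<1$ gives $p+q=2/k+\delta\frac{k-2}{k-1}<5\delta$, while $p-q=\delta\frac{k}{k-1}\ge\delta$ always.

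\emph{Upper bound.} I would apply the exponential-moment (Chernoff) method. Since $p\ge q$ we may take $t=\tfrac12\ln(p/q)\ge 0$, and then
\[
\Pr\Big[\textstyle\sum_i X_i\le 0\Big]\;\le\;\big(\Mean{e^{-tX_1}}\big)^n\;=\;\big(pe^{-t}+qe^{t}+r\big)^n\;=\;\big(2\sqrt{pq}+r\big)^n\;=\;\big(1-(\sqrt p-\sqrt q)^2\big)^n\;\le\;e^{-n(\sqrt p-\sqrt q)^2}.
\]
It remains to bound $(\sqrt p-\sqrt q)^2$ from below: writing it as $(p-q)^2/(\sqrt p+\sqrt q)^2$ and using $(\sqrt p+\sqrt q)^2\le 2(p+q)$ (AM--GM) together with the two estimates above,
\[
(\sqrt p-\sqrt q)^2\;\ge\;\frac{(p-q)^2}{2(p+q)}\;>\;\frac{\delta^2}{2\cdot 5\delta}\;=\;\frac{\delta}{10},
\]
so $\Pr[\sum_i X_i\le 0]<e^{-\delta n/10}\le 10\,e^{-\delta n/10}$, which is the claimed upper bound with $c_1=10$.

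\emph{Lower bound.} Here I would simply keep the single outcome in which every variable vanishes:
\[
\Pr\Big[\textstyle\sum_i X_i\le 0\Big]\;\ge\;\Pr[X_1=\cdots=X_n=0]\;=\;r^n.
\]
Using $\ln(1/x)\le (1-x)/x$ for $x\in(0,1]$ together with $r\ge 1/12$ and $1-r<5\delta$ gives $-\ln r\le (1-r)/r<12\cdot 5\delta=60\delta$, hence $r^n\ge e^{-60\delta n}\ge \tfrac1{60}e^{-60\delta n}$, which is the claimed lower bound with $c_2=60$. (If a single constant is wanted for both inequalities of the lemma one just takes the maximum.)

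\emph{Difficulty.} The computations are elementary, so I do not expect a genuine obstacle; the one thing to watch is that the correct exponent in this regime is linear in $\delta$, in contrast with the $\delta^2 nk$ of Lemma~\ref{lem:smalldelta}, the two matching at the threshold $\delta=1/(2k)$ where $\delta^2 nk=\delta n/2$. For the lower bound one could instead estimate $\Pr[\sum_i X_i=0]$ via a local-limit theorem, but since $1-r=\Theta(\delta)$ the trivial all-zero bound already loses only a constant factor in the exponent, which is all the statement requires.
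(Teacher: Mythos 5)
Your proof is correct, and both halves differ from the paper's argument in a clean and instructive way. For the upper bound, the paper splits $X_i$ into the two indicator counts $\sum Y_i$ (number of $+1$'s) and $\sum Z_i$ (number of $-1$'s), applies a multiplicative Chernoff bound to each around the threshold $(1/k+\delta/2)n$, and finishes with a union bound; you instead bound $\Mean{e^{-tX_1}}$ directly, choose the optimizing $t=\tfrac12\ln(p/q)$, and land on the Hellinger-type expression $\bigl(1-(\sqrt{p}-\sqrt{q})^2\bigr)^n$, which you then lower-bound using $p-q\ge\delta$ and $p+q<5\delta$. Your route yields a single closed-form bound and avoids the union bound, at the cost of the short AM--GM computation; both are elementary. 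For the lower bound, the paper keeps the (larger) event $\{\text{no }X_i=1\}$, giving $(1-1/k-\delta)^n\ge\exp(-12\delta n)$, whereas you keep the all-zero event $\{X_1=\cdots=X_n=0\}$, giving $r^n\ge\exp(-60\delta n)$; the paper's choice gives a slightly better constant, but since the lemma only asserts existence of constants, either suffices. Your observations that the $k=2$ case is vacuous and that the two regimes of Lemmas~\ref{lem:smalldelta} and~\ref{lem:bigdelta} match at $\delta=1/(2k)$ are correct and add useful context.
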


\subsection{Upper bound - Proposed Algorithm} 
Our algorithm is a simple and efficient non-adaptive algorithm. The basic idea is to choose a set of nodes $S$ as a \emph{seed} set of nodes. We then make all queries between $S$ and the full node set $V$. Based on these queries, we first determine the label $g(s)$ of all nodes $s \in S$ (up to a cyclic shift). Once these have been determined, we can determine the labels of all remaining nodes $v$ by using a plurality vote on $\{g(s) + \tilde{f}(v,s) \bmod k \mid s \in S\}$. We proceed to give the details.

\begin{lemma}[Plurality vote]
\label{lem1} 
Let $S \subseteq V$ be an arbitrary seed set of nodes and assume $k \leq n^{o(1)}$. For any node $v \in V \setminus S$, the plurality vote among $\{ g(s) + \tilde{f}(v,s) \bmod k \mid s \in S\}$ is equal to $g(v)$ with probability at least $1- \frac{1}{n^2}$ if either:
\begin{itemize}
\item $0 \leq \delta \leq 1/2k$ and $|S| = \Omega(\frac{\lg n}{\delta^2 k})$, or
\item $1/2k \leq \delta \leq 1/4$ and $|S| = \Omega(\frac{\lg n}{\delta})$.
\end{itemize}
\end{lemma}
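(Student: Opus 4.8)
The plan is to fix a node $v \in V \setminus S$, condition on the event that all seed labels $g(s)$ for $s \in S$ have been recovered correctly (this is handled separately in the algorithm's analysis, so here we may assume $g(s)$ is known exactly), and analyze the random vote $\{g(s) + \tilde f(v,s) \bmod k \mid s \in S\}$. By the noise model \eqref{eq:model}--\eqref{eq:noise}, $g(s) + \tilde f(v,s) = g(v) + \eta_{vs} \bmod k$, where the $\eta_{vs}$ are i.i.d.\ over $s \in S$, each equal to $0$ with probability $\tfrac1k + \delta$ and equal to each nonzero residue with probability $\tfrac1k - \tfrac{\delta}{k-1}$. So the vote is: $g(v)$ receives $\text{Bin}(|S|, \tfrac1k+\delta)$ votes, and each wrong label $r \neq g(v)$ receives $\text{Bin}(|S|, \tfrac1k - \tfrac{\delta}{k-1})$ votes (these counts are negatively correlated but jointly multinomial). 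The plurality vote fails only if some wrong label $r$ ties or beats $g(v)$; by a union bound over the $k-1$ wrong labels, it suffices to bound, for a fixed $r \neq g(v)$, the probability that (votes for $r$) $\geq$ (votes for $g(v)$), and then multiply by $k-1 \leq n^{o(1)}$.

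Next I would reduce this pairwise comparison to the quantities controlled by Lemmas~\ref{lem:smalldelta} and~\ref{lem:bigdelta}. For a fixed $r \neq g(v)$, define for each $s \in S$ the variable $X_s = \mathbf{1}[\eta_{vs} = 0] - \mathbf{1}[\eta_{vs} = r']$ where $r'$ is the residue with $g(v)+r' \equiv r$; then $X_s$ takes value $1$ with probability $\tfrac1k+\delta$, value $-1$ with probability $\tfrac1k - \tfrac{\delta}{k-1}$, and $0$ otherwise, exactly matching the hypotheses of the two lemmas. The event ``$r$ ties or beats $g(v)$'' is precisely $\sum_{s \in S} X_s \leq 0$. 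Applying the upper-tail bound from Lemma~\ref{lem:smalldelta} (when $0 \le \delta \le 1/2k$) gives $\Pr[\sum_s X_s \leq 0] \leq c_1 \exp(-\delta^2 |S| k / c_1)$, and from Lemma~\ref{lem:bigdelta} (when $1/2k < \delta \le 1/4$) gives $\leq c_1 \exp(-\delta |S| / c_1)$.

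It remains to choose $|S|$ so that, after multiplying by $k - 1 \le n^{o(1)}$ wrong labels, the failure probability is at most $1/n^2$. In the small-$\delta$ case, we want $c_1 k \exp(-\delta^2 |S| k / c_1) \leq n^{-2}$; taking $|S| = \Omega\!\big(\tfrac{\lg n}{\delta^2 k}\big)$ with a sufficiently large hidden constant makes $\delta^2 |S| k / c_1 \geq 3 \lg n + \lg(c_1 k)$ (using $\lg k = o(\lg n)$ since $k \le n^{o(1)}$), which suffices. In the large-$\delta$ case, $|S| = \Omega\!\big(\tfrac{\lg n}{\delta}\big)$ makes $\delta |S| / c_1 \geq 3\lg n + \lg(c_1 k)$ similarly. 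This yields the claimed per-node success probability $1 - 1/n^2$. The main obstacle — and the only place real work is needed — is making sure the comparison between a wrong label $r$ and the true label $g(v)$ is set up so that the relevant difference variable matches the exact $\{1,-1,0\}$ distribution of the concentration lemmas; once that is in place, the rest is a union bound and a parameter substitution. A minor subtlety to mention is that we only need the upper-tail halves of Lemmas~\ref{lem:smalldelta} and~\ref{lem:bigdelta} here; the lower-tail bounds are used elsewhere (in the lower bound argument).
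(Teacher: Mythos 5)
Your proof is correct and follows essentially the same route as the paper: reduce to a pairwise comparison between the true label $g(v)$ and each wrong label, recast the comparison as $\sum_s X_s \le 0$ for $\{1,-1,0\}$-valued variables matching the hypotheses of Lemma~\ref{lem:smalldelta}/\ref{lem:bigdelta}, then take a union bound over the $k-1$ wrong labels and use $k \le n^{o(1)}$. The only difference is that you spell out the construction of the pairwise-comparison variable $X_s$ explicitly, which the paper leaves implicit; the substance is identical.
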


By taking a union bound over all nodes $v \notin S$, we obtain the following straight-forward corollary. 

\begin{corollary} 
\label{cor:seed}
Assume we have a seed of nodes $S$, such that for all $s \in S$, we know $g(s) + \alpha \bmod k$ for some (shared) cyclic shift $\alpha \in \{0,\dots,k-1\}$. Then it is possible to recover $g(v) + \alpha \bmod k$ for all $v \in V $ in $O(n|S|)$ time \whp. provided that $k \leq n^{o(1)}$ and either:
\begin{itemize}
\item $0 \leq \delta \leq 1/2k$ and $|S| = \Omega(\frac{\lg n}{\delta^2 k})$, or
\item $1/2k \leq \delta \leq 1/4$ and $|S| = \Omega(\frac{\lg n}{\delta})$.
\end{itemize}
\end{corollary}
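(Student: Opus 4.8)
The plan is to obtain Corollary~\ref{cor:seed} directly from Lemma~\ref{lem1} (Plurality vote) via a union bound. Fix the (unknown) global shift $\alpha$ promised in the hypothesis, so that the value $h(s) := g(s) + \alpha \bmod k$ is known for every $s \in S$. For a node $v \in V \setminus S$, consider the multiset $\{ h(s) + \tilde f(v,s) \bmod k \mid s \in S\}$. Using $\tilde f(v,s) = (g(v) - g(s) + \eta_{vs}) \bmod k$, each element equals $(g(s)+\alpha) + (g(v)-g(s)+\eta_{vs}) \equiv (g(v)+\alpha) + \eta_{vs} \pmod k$, so this multiset is exactly the one considered in Lemma~\ref{lem1} shifted additively by the constant $\alpha$. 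Hence its plurality equals $g(v) + \alpha \bmod k$ precisely when the plurality in Lemma~\ref{lem1} equals $g(v)$, and under either of the two stated regimes on $(\delta,|S|)$ together with $k \le n^{o(1)}$, Lemma~\ref{lem1} guarantees this occurs with probability at least $1 - 1/n^2$.

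Next I would take a union bound over the at most $n$ nodes $v \in V \setminus S$: the probability that the plurality vote fails to output $g(v) + \alpha \bmod k$ for even one such $v$ is at most $n \cdot n^{-2} = 1/n$. No independence across distinct nodes $v$ is needed for this step; the only independence invoked is among $\{\eta_{vs}\}_{s \in S}$ for a fixed $v$, which is internal to Lemma~\ref{lem1} (and holds since these correspond to distinct queried pairs). For the nodes $v \in S$ the value $g(v) + \alpha \bmod k$ is already known by hypothesis, so with probability at least $1 - 1/n$ we have correctly recovered $g(v) + \alpha \bmod k$ for every $v \in V$.

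For the running time, note that given $h(s)$ for all $s \in S$ we can compute, for a fixed $v$, the plurality of the $|S|$ values $h(s) + \tilde f(v,s) \bmod k \in \{0,\dots,k-1\}$ in $O(|S|)$ time: maintain a global array of $k$ counters, scan $S$ once to increment the relevant counters while tracking the running maximum, then scan $S$ once more to reset the touched counters to zero. The one-time cost of allocating and zeroing the array is $O(k) = n^{o(1)} = O(n|S|)$ since $|S| = \Omega(\lg n)$, and the per-node cost is $O(|S|)$, for a total of $O(n|S|)$ over all $n$ nodes. I do not expect a genuine obstacle here: the content is entirely the union bound, and the only points requiring care are the bookkeeping of the shift $\alpha$ (so that Lemma~\ref{lem1} applies verbatim up to translation) and checking that plurality can be computed in linear time so that the $O(n|S|)$ bound is met.
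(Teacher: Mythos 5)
Your proof is correct and follows the same route as the paper: apply Lemma~\ref{lem1} to each node $v \in V \setminus S$ and take a union bound over the at most $n$ nodes, which is exactly the one-line justification the paper gives for this corollary. You have simply filled in the (straightforward) bookkeeping of the shift $\alpha$ and the $O(n|S|)$ running-time accounting that the paper leaves implicit.
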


\begin{proof}[Proof of Lemma~\ref{lem1}]
Each query $\tilde{f}(v,s)$ returns $(g(v) - g(s) + \eta_{vs}) \bmod k$. We thus have $(g(s) + \tilde{f}(v,s)) \bmod k = (g(v) + \eta_{vs}) \bmod k$. Therefore $(g(s) + \tilde{f}(v,s)) \bmod k = g(v)$ with probability $1/k + \delta$, and for every $i \in \{1,\dots,k-1\}$, we have $(g(s) + \tilde{f}(v,s)) \bmod k = (g(v) + i) \bmod k$ with probability $1/k - \delta/(k-1)$. Using Lemma~\ref{lem:smalldelta} and a union bound over all $i \in \{1,\dots,k-1\}$, we thus conclude for $0 \leq \delta \leq 1/2k$, that the plurality vote equals $g(v)$ with probability at least $1-k c_1 \exp(-\delta^2 |S|k/c_1)$ for a constant $c_1 > 0$. For $|S|= \Omega(\frac{\lg n}{\delta^2 k})$ and $k \leq n^{o(1)}$, this is at least $1-1/n^2$. Similarly we use Lemma~\ref{lem:bigdelta} and a union bound over all $i \in \{1,\dots,k-1\}$ to conclude for $1/2k \leq \delta \leq 1/4$, that the plurality vote equals $g(v)$ with probability at least $1-k c_1 \exp(-\delta |S|/c_1)$ for a constant $c_1 > 0$. For $|S| = \Omega(\frac{\lg n}{\delta})$ and $k \leq n^{o(1)}$, this is at least $1-1/n^2$.
\end{proof}

\noindent Given Corollary~\ref{cor:seed} it suffices to find a seed set $S \subseteq V$ and determine the labels of the nodes in $S$ up to the same cyclic shift $\alpha$. Our next lemma shows how to do so via queries $\tilde{f}(s,v)$ for nodes $s \in S$ and $v \in V \setminus S$. Our key idea is to determine the difference $(g(s) - g(s')) \bmod k$ for pairs $s,s' \in S$ via queries $\tilde{f}(s,b)-\tilde{f}(s',b)$ for nodes $b \in V \setminus S$.

\begin{lemma}[Learning Pairwise Differences]
\label{lem2}
Let $S \subseteq V$ be an arbitrary set of nodes and assume $k \leq n^{o(1)}$. Let $s,s' \in S$ be two distinct nodes. Define  $Z_{s,s'}$ as the plurality vote among the answers $\{ (\tilde{f}(s,b)-\tilde{f}(s',b)) \bmod k \}_{b \in V \setminus S}$. If $|V \setminus S| = \Omega(\frac{\lg n}{k \delta^4} + \frac{\lg n}{\delta^2})$, then $Z_{a,a'}=(g(a)-g(a')) \bmod k$ with probability at least $1-\frac{1}{n^2}$.
\end{lemma}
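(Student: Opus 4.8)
The plan is to reduce the claim to the plurality-vote analysis already performed in Lemma~\ref{lem1}, by showing that the ``doubled'' noise terms obtained from the differences $\tilde f(s,b)-\tilde f(s',b)$ again obey a biased-uniform law of the form~\eqref{eq:noise}, only with a smaller bias $\delta'' = \Theta(\delta^2)$.

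First I would substitute Equation~\eqref{eq:model} to get $(\tilde f(s,b)-\tilde f(s',b))\bmod k = (g(s)-g(s') + \eta_{sb}-\eta_{s'b})\bmod k$, so that $Z_{s,s'} = \big((g(s)-g(s')) + W\big)\bmod k$, where $W$ is the plurality vote among $\{\zeta_b\}_{b\in V\setminus S}$ and $\zeta_b := (\eta_{sb}-\eta_{s'b})\bmod k$. The next, and only substantive, step is to identify the law of a single $\zeta_b$. Since $\zeta_b=0$ exactly when $\eta_{sb}=\eta_{s'b}$ and the two noise variables are independent, a short computation gives $\Pr[\zeta_b=0] = (1/k+\delta)^2 + (k-1)(1/k-\delta/(k-1))^2 = 1/k + \delta^2 k/(k-1)$; by the evident symmetry of the convolution, $\Pr[\zeta_b=j]$ does not depend on $j$ for $j\neq 0$, hence equals $1/k - \delta^2 k/(k-1)^2$. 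Setting $\delta'' := \delta^2 k/(k-1)$ we thus have exactly the law of~\eqref{eq:noise} with bias $\delta''$, and $\delta^2 \le \delta'' \le 2\delta^2$ for all $k\ge 2$. Finally, $\{\zeta_b\}_{b\in V\setminus S}$ are mutually independent: for distinct $b,b'\in V\setminus S$ the pairs $\{s,b\},\{s',b\},\{s,b'\},\{s',b'\}$ are all distinct (this is where $s\neq s'$ and $b,b'\notin S$ are used), so $\zeta_b,\zeta_{b'}$ depend on disjoint blocks of the i.i.d.\ noise.

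From here the argument is verbatim that of Lemma~\ref{lem1} with $\delta$ replaced by $\delta''$ and $|S|$ replaced by $N := |V\setminus S|$. For each $j\neq 0$ put $Y^{(j)}_b = \mathbf{1}[\zeta_b=0] - \mathbf{1}[\zeta_b=j]$, which is $+1$ w.p.\ $1/k+\delta''$, $-1$ w.p.\ $1/k-\delta''/(k-1)$ and $0$ otherwise; the plurality vote $W$ fails to equal $0$ only if $\sum_b Y^{(j)}_b \le 0$ for some $j$. As in Lemma~\ref{lem1} we may assume $\delta\le 1/4$, so $\delta''\le 2\delta^2 \le 1/8$; applying Lemma~\ref{lem:smalldelta} when $\delta''\le 1/2k$ and Lemma~\ref{lem:bigdelta} when $1/2k<\delta''\le 1/4$ to each $\sum_b Y^{(j)}_b$, and union-bounding over the $k-1$ choices of $j$, the failure probability is at most $kc_1\exp(-\delta''^2 Nk/c_1)$ in the first case and $kc_1\exp(-\delta'' N/c_1)$ in the second. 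Since $\delta''=\Theta(\delta^2)$ and $k\le n^{o(1)}$, both are below $1/n^2$ once $N=\Omega(\tfrac{\lg n}{\delta''^2 k})=\Omega(\tfrac{\lg n}{\delta^4 k})$, resp.\ $N=\Omega(\tfrac{\lg n}{\delta''})=\Omega(\tfrac{\lg n}{\delta^2})$; the hypothesis $N=\Omega(\tfrac{\lg n}{k\delta^4}+\tfrac{\lg n}{\delta^2})$ covers both regimes, and the claim follows.

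I do not anticipate a genuine obstacle: the content is the distributional observation that the difference of two independent copies of the noise is again biased-uniform with the bias essentially squared, plus the bookkeeping that the $\zeta_b$ stay independent. The one point to be careful about is exactly that independence, which is why the averaging nodes $b$ are drawn from $V\setminus S$ and $s,s'$ are distinct.
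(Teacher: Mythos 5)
Your proposal is correct and follows essentially the same route as the paper: compute the law of $\zeta_b=(\eta_{sb}-\eta_{s'b})\bmod k$, observe it is again of the biased-uniform form~\eqref{eq:noise} with effective bias $\delta'=k\delta^2/(k-1)=\Theta(\delta^2)$, and invoke the Lemma~\ref{lem1} analysis (via Lemmas~\ref{lem:smalldelta} and~\ref{lem:bigdelta}) with $\delta$ replaced by $\delta'$ and the sample size replaced by $|V\setminus S|$. The only cosmetic difference is that you make explicit the mutual independence of the $\zeta_b$'s across distinct $b$, a point the paper uses implicitly.
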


\begin{proof}
Since
\begin{align*}
\tilde{f}(s,b)-\tilde{f}(s',b) &= (g(s)-g(s') \bmod k)   +(\eta_{s,b}-\eta_{s',b}\bmod k),
\end{align*}
we need to understand the probability distribution of $Z_b=\eta_{s,b}-\eta_{s',b}\bmod k$. Intuitively, we wish that the probability $\Pr[Z_b=0]$ is greater enough than each $\Pr[Z_b=i]$ where $ i \neq 0$  so that the plurality vote gives the correct estimate for $g(s)-g(s')$. Indeed, 

\begin{align*}
\Pr[Z_b = 0] &= \sum_{j=0}^{k-1} \Pr[\eta_{s,b}=\eta_{s',b}=j] = \Pr[\eta_{s,b}=\eta_{s',b}=0] + \sum_{j=1}^{k-1} \Pr[\eta_{s,b}=\eta_{s',b}=j] = \\ 
&=\Big(\frac{1}{k}+\delta \Big)^2 + (k-1) \Big( \frac{1}{k}-\frac{\delta}{k-1}\Big)^2 = \frac{1}{k}+\frac{k\delta^2}{k-1}.
\end{align*}

Also $Z_b$ is uniform over $1,\dots,k-1$ with the remaining probability, i.e. $\Pr[Z_b = i] = \frac{1}{k}-\frac{k \delta^2}{(k-1)^2}$ for $i \neq 0$. We thus obtain the exact same guarantees as in Lemma~\ref{lem1} with $\delta$ replaced by $\delta' = \frac{k \delta^2}{k-1}$. That is, if either
\begin{itemize}
\item $0 \leq \frac{k \delta^2}{k-1} \leq 1/2k$ and $|V \setminus S| = \Omega(\frac{(k-1)^2 \lg n}{ k^3 \delta^4})$, or
\item $1/2k \leq \frac{k \delta^2}{k-1} \leq 1/4$ and $|V \setminus S| = \Omega(\frac{(k-1) \lg n}{ k \delta^2})$.
\end{itemize}
then the plurality vote among $\{ (\tilde{f}(s,b)-\tilde{f}(s',b)) \bmod k \}_{b \in B}$ equals $(g(s)-g(s')) \bmod k$ with probability at least $1-1/n^2$. Combining the two, we conclude from the above that the plurality vote is correct with probability at least $1-1/n^2$ provided that $|V \setminus S| = \Omega(\frac{(k-1)^2 \lg n}{ k^3 \delta^4} + \frac{(k-1) \lg n}{ k \delta^2}) = \Omega(\frac{\lg n}{k \delta^4} + \frac{\lg n}{\delta^2})$.
\end{proof}

In light of the above, our proposed algorithm is thus to pick a set $S$ and perform all queries between $S$ and $V \setminus S$. Based on Lemma~\ref{lem1} and Lemma~\ref{lem2}, we set $|S| = O(\frac{\lg n}{k \delta^2})$ when $0 \leq \delta \leq 1/2k$ and $|S| = O(\frac{\lg n}{\delta})$ when $1/2k \leq \delta \leq 1/4$. We then fix a node $s \in S$ and assign it the label $\hat{g}(s) = 0$. We thus have $\hat{g}(s) = (g(s) + (0 - g(s))) \bmod k$, i.e. $g(s)$ has been recovered up to a cyclic shift of $(0-g(s))$. Our goal is to recover all other labels up to the same cyclic shift.

We now compute an estimate $\mu_{s'}$ of $(g(s)-g(s')) \bmod k$ for every $s' \in S \setminus \{s\}$ using a plurality vote on $\{ (\tilde{f}(s,b)-\tilde{f}(s',b)) \bmod k \}_{b \in V \setminus S}$. A union bound over all nodes in $S$ together with Lemma~\ref{lem2} shows that all these estimates are correct \whp. We then assign the label $\hat{g}(s') = \mu_{s'}$ to all remaining nodes $s' \in S$. If all plurality votes were correct, then $\hat{g}(s') = \mu_{s'} = (g(s')-g(s)) \bmod k = (g(s') + (0 - g(s)))\bmod k$ for all $s'$. That is, we have recovered each $g(s')$ up to the same cyclic shift $(0-g(s)) \bmod k$.

To recover the labels of all remaining nodes $v \in V \setminus S$ in the graph (up to the shift $(0-g(s)) \bmod k$), we use a plurality vote on $\{\hat{g}(s') + \tilde{f}(v,s') \bmod k\}_{s' \in S} = \{g(s') + (0-g(s)) + \tilde{f}(v,s') \bmod k\}_{s' \in S}$. Corollary~\ref{cor:seed} and a union bound over all nodes in $V \setminus S$ gives us that our algorithm recovers all labels \whp. Our proposed algorithm is also shown in pseudocode, see Algorithm~\ref{alg:3cc}. 

\begin{algorithm}[h]
\caption{\label{alg:3cc} Learning Joint Alignment with a Faulty Oracle} 
 \begin{algorithmic} 
 \STATE Choose $S \subseteq V$ such that $|S|=O(\frac{\log n}{k \delta^2})$ if $0 \leq \delta \leq 1/2k$ and
 $|S| =  O(\frac{\lg n}{\delta})$ if $1/2k \leq \delta \leq 1/4$. 
 \STATE Perform all queries between $S$ and $V \setminus S$.
 \STATE Fix a node $s \in S$ and assign it the label $\hat{g}(s)=0$.
\STATE For each $s' \in S \setminus \{s\}$, compute an estimate $\mu_{s'}$ of $(g(s')-g(s)) \bmod k$ using the plurality vote among the queries $\{ \tilde{f}(s',b)-\tilde{f}(s,b) \}_{b \in V \setminus S}$ and assign $s'$ the label $\hat{g}(s')=\mu_{s'}$.
 \STATE For each $v  \notin V \setminus S$, assign it a label corresponding to the result of the plurality vote among $\{ \hat{g}(s) + \tilde{f}(v,s)\}_{s \in S}$. 
\end{algorithmic}
\end{algorithm}

As a last remark, notice that we can only choose $|S| = O(\frac{\lg n}{k \delta^2})$ or $|S| = O(\frac{\lg n}{\delta})$ provided that $\frac{\lg n}{k \delta^2} = O(n)$ in the first case and $\frac{\lg n}{\delta} = O(n)$ in the second case. Assume first that indeed $|S| \leq n/2$. Then Lemma~\ref{lem2} further requires that $|V \setminus S| = \Omega(\frac{\lg n}{k \delta^4} + \frac{\lg n}{\delta^2})$. Since $|V \setminus S| \geq n/2$ when $|S| \leq n/2$, this translates into $\frac{\lg n}{k \delta^4} + \frac{\lg n}{\delta^2} = O(n)$. This is a more strict requirement than $\frac{\lg n}{k \delta^2} = O(n)$ and $\frac{\lg n}{\delta} = O(n)$. We can thus invoke our algorithm as long as $\delta = \Omega((\lg n/nk)^{1/4})$ and $\delta = \Omega(\sqrt{1/n})$. We assume $k \leq n^{o(1)}$, hence the dominating requirement is $\delta = \Omega((\lg n/nk)^{1/4})$.

The algorithm is completely non-adaptive, correct \whp. and each plurality vote can be computed in linear time in the number of estimates involved. The total running time of the algorithm is thus $O(|V||S|)$ and so is the number of queries. When $0 \leq \delta \leq 1/2k$, this is $O(\frac{n \lg n}{\delta})$ and when $1/2k \leq \delta \leq 1/4$, this is $O(\frac{n \lg n}{k \delta^2})$.

\begin{theorem}
If $(\lg n/nk)^{1/4} \leq \delta \leq 1/2k$ and $k \leq n^{o(1)}$, then there is a non-adaptive and deterministic query algorithm that makes $O(\frac{n\log n}{\delta^2 k})$ queries, runs in $O(\frac{n\log n}{\delta^2 k})$ time and is correct \whp.

If $1/2k \leq \delta \leq 1/4$ and $k \leq n^{o(1)}$, then there is a non-adaptive and deterministic query algorithm that makes $O(\frac{n \log n}{\delta})$ queries, runs in $O(\frac{n\log n}{\delta})$ time and is correct \whp.
\end{theorem}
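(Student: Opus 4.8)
The plan is to show that Algorithm~\ref{alg:3cc}, instantiated with the two stated choices of $|S|$, satisfies all three assertions (query count, running time, correctness \whp), by assembling Lemma~\ref{lem1}, Lemma~\ref{lem2} and Corollary~\ref{cor:seed}. The preceding discussion already lays out the pieces; the proof is the bookkeeping that verifies the hypotheses of each lemma hold under the stated ranges of $\delta$ and $k$, and then strings the three results together with a union bound.

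First I would check that the chosen seed set is \emph{feasible}: that $|S|\le n/2$ and that $|V\setminus S|$ is large enough for Lemma~\ref{lem2}. In the first case, $\delta\ge(\lg n/nk)^{1/4}$ gives $\delta^2\ge\sqrt{\lg n/nk}$, hence $\tfrac{\lg n}{k\delta^2}\le\sqrt{n\lg n/k}=o(n)$; in the second case $\delta\ge 1/2k$ and $k\le n^{o(1)}$ give $\tfrac{\lg n}{\delta}\le 2k\lg n=n^{o(1)}\lg n=o(n)$. So $|S|\le n/2$ and $|V\setminus S|\ge n/2$ for $n$ large. Next, Lemma~\ref{lem2} requires $|V\setminus S|=\Omega(\tfrac{\lg n}{k\delta^4}+\tfrac{\lg n}{\delta^2})$. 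Since $|V\setminus S|\ge n/2$, in the first case this is exactly what the hypothesis $\delta\ge(\lg n/nk)^{1/4}$ buys us (it forces $\tfrac{\lg n}{k\delta^4}=O(n)$, and together with $k\le n^{o(1)}$ it also forces $\delta=\Omega(\sqrt{\lg n/n})$, so $\tfrac{\lg n}{\delta^2}=O(n)$ as well); in the second case $\tfrac{\lg n}{k\delta^4}\le 16k^3\lg n$ and $\tfrac{\lg n}{\delta^2}\le 4k^2\lg n$ are both $n^{o(1)}\lg n=o(n)$.

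For correctness, fix the node $s\in S$ with $\hat g(s)=0$ and set $\alpha=(-g(s))\bmod k$. Applying Lemma~\ref{lem2} to each of the $|S|-1$ pairs $(s',s)$, $s'\in S\setminus\{s\}$, and taking a union bound, with probability at least $1-|S|/n^2\ge 1-1/n$ every estimate satisfies $\mu_{s'}=(g(s')-g(s))\bmod k=(g(s')+\alpha)\bmod k$; also $\hat g(s)=0=(g(s)+\alpha)\bmod k$. On this event the seed $S$ with labels $\hat g(\cdot)$ meets the hypothesis of Corollary~\ref{cor:seed} with shift $\alpha$, and since $|S|=\Omega(\tfrac{\lg n}{\delta^2k})$ (first case) resp.\ $\Omega(\tfrac{\lg n}{\delta})$ (second case), the corollary gives $\hat g(v)=(g(v)+\alpha)\bmod k$ for all $v\in V$ with probability at least $1-1/n$. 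Hence the algorithm recovers $g$ up to the global offset $\alpha$ with probability at least $1-2/n=1-o(1)$. Finally, the only queries made are the $|S|\cdot|V\setminus S|\le n|S|$ pairs between $S$ and $V\setminus S$, which is $O(\tfrac{n\lg n}{k\delta^2})$ resp.\ $O(\tfrac{n\lg n}{\delta})$; each plurality vote over $m$ answers runs in $O(m)$ time, so computing all $\mu_{s'}$ costs $O(n|S|)$ and the final vote at every $v$ costs $O(n|S|)$ in total, matching the query bound. Non-adaptivity and determinism are immediate from the description of Algorithm~\ref{alg:3cc}.

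The only genuinely delicate step is the feasibility bookkeeping in the second paragraph: one must confirm that the stated constraints on $\delta$ and $k$ simultaneously make $|S|$ small enough to fit in $V$ and keep $|V\setminus S|$ above the threshold demanded by Lemma~\ref{lem2} (whose threshold is governed by the squared bias $\delta'=\Theta(k\delta^2)$, which is why the $\delta^{-4}$ term appears). Everything after that is a routine union-bound assembly of the preceding lemmas.
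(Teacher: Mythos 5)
Your proof is correct and follows the same route the paper takes: the theorem is proved in the text preceding it by assembling Lemma~\ref{lem2} (to fix the seed labels up to a common shift), Corollary~\ref{cor:seed} (to extend to all of $V$ via plurality votes), and a final feasibility check that the constraint $\delta\ge(\lg n/nk)^{1/4}$ is exactly what makes $|S|\le n/2$ while keeping $|V\setminus S|=\Omega(\lg n/(k\delta^4)+\lg n/\delta^2)$. Your write-up is if anything slightly more careful than the paper's final paragraph, which in its running-time summary inadvertently swaps the two $\delta$-ranges (the theorem statement itself has them in the right order, as you do).
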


\subsection{Lower bound } 
In this section, we complement our algorithm with a matching lower bound:
\begin{theorem}
If $1/n^{1/4} \leq \delta \leq 1/2k$ and $k \leq n^{o(1)}$, then any non-adaptive and possibly randomized query algorithm making $o(\frac{n\log n}{\delta^2 k})$ queries has success probability at most $\exp(-n^{\Omega(1)})$.

If $1/2k \leq \delta \leq 1/4$ and $k \leq n^{o(1)}$, then any non-adaptive and possibly randomized query algorithm making $o(\frac{n \log n}{\delta})$ queries has success probability at most $\exp(-n^{\Omega(1)})$.
\end{theorem}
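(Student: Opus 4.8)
The plan is to prove the lower bound by a Yao-style reduction of the recovery task, on a random ground truth, to a large number of \emph{independent} single-vertex hypothesis tests, each of which is shown to be hard using the lower tails already established in Lemma~\ref{lem:smalldelta} and Lemma~\ref{lem:bigdelta}. I will focus on the regime $0\le\delta\le 1/2k$; the regime $1/2k\le\delta\le 1/4$ is handled identically, with Lemma~\ref{lem:bigdelta} in place of Lemma~\ref{lem:smalldelta}. First I would invoke Yao's principle: since the algorithm is non-adaptive its query set is chosen independently of the answers, so it suffices to take $g(1),\dots,g(n)$ i.i.d.\ uniform on $\{0,\dots,k-1\}$ and to lower bound the failure probability of an \emph{arbitrary deterministic} non-adaptive algorithm, i.e.\ of a fixed query set $\Omega$ with $Q:=|\Omega|=o(n\log n/(\delta^2k))$ together with a fixed decoding map; we measure success by whether the output recovers $g$ up to a global cyclic shift, equivalently whether it recovers all pairwise differences $(g(v)-g(w))\bmod k$. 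We may assume $Q\ge 1$, since otherwise the algorithm has no information and the claim is immediate.

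Next I would locate many low-degree, mutually non-adjacent, ``hard'' vertices. Because $\sum_v\deg_\Omega(v)=2Q$, Markov's inequality shows at most $n/2$ vertices have degree above $4Q/n$, so the set $T$ of vertices with $\deg_\Omega(v)\le D:=\lfloor 4Q/n\rfloor=o(\log n/(\delta^2k))$ satisfies $|T|\ge n/2$. Inside $T$ I would greedily extract a maximal independent set $T'$ of $\Omega$; each greedy step removes at most $D+1$ vertices from $T$, so $|T'|\ge |T|/(D+1)=\Omega(n\delta^2k/\log n)$, which is $\Omega(n^{1/2}/\log n)$ since $\delta\ge n^{-1/4}$ and $k\ge2$, hence polynomially large. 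Note $V\setminus T'\neq\emptyset$ (any edge has an endpoint outside $T'$). I would then reveal to the algorithm the values $g(u)$ for all $u\notin T'$ (a genie step that only helps it), and observe that, fixing $u_0\notin T'$, the quantity $\tilde g(v):=\hat g(v)-\hat g(u_0)+g(u_0)$ is a deterministic function of the observations and that success forces $\tilde g(v)=g(v)$ for every $v\in T'$ (the revealed $u_0$ pins down the global shift).

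The heart of the argument is the decoupling. I would condition on $\Omega$, on all revealed values, and on the noise $\eta_e$ of every edge $e\in\Omega$ not incident to $T'$. Since $T'$ is independent in $\Omega$, every surviving edge at a vertex $v\in T'$ leads to a revealed vertex, so the residual data is exactly, for each $v\in T'$, the count vector $N^{(v)}\in\mathbb{Z}_{\ge0}^k$ recording how many of the $\deg(v)\le D$ values $(g(v)+\eta_{vu})\bmod k$ over $v$'s neighbours $u$ equal each residue; the pairs $(g(v),N^{(v)})_{v\in T'}$ are mutually independent, $g(v)$ uniform, and $N^{(v)}\mid g(v)$ multinomial with the cell probabilities of \eqref{eq:noise}. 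As $\tilde g$ is now a deterministic function of $(N^{(w)})_{w\in T'}$ and the $g(v)$ remain conditionally independent given this data (with $g(v)$ depending only on $N^{(v)}$), the posterior factorizes and
\[
\Pr[\text{success}]\ \le\ \Pr\big[\tilde g(v)=g(v)\ \forall v\in T'\big]\ \le\ \prod_{v\in T'}\mathbb{E}\big[\textstyle\max_a\Pr[g(v)=a\mid N^{(v)}]\big],
\]
i.e.\ the product, over $v\in T'$, of the success probability of the \emph{optimal} (maximum-likelihood) decoder that sees only $v$'s own $\le D$ noisy copies of $g(v)$. To bound each factor I would take $g(v)=0$ by symmetry, note the ML decoder outputs $\arg\max_a N^{(v)}_a$ and errs with probability at least $\tfrac12\Pr[N^{(v)}_0\le N^{(v)}_1]=\tfrac12\Pr[\sum_{t=1}^{\deg(v)}X_t\le0]$ (on that event $0$ is not the strict mode, so correctness needs a favourable tie-break, probability $\le 1/2$), where the $X_t$ are precisely the $\{+1,-1,0\}$-valued variables of Lemma~\ref{lem:smalldelta}; hence this is at least $\tfrac1{2c_2}\exp(-c_2\delta^2\deg(v)k)\ge\tfrac1{2c_2}\exp(-c_2\delta^2Dk)$. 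Choosing the implicit constant $\epsilon$ in ``$Q\le\epsilon\, n\log n/(\delta^2k)$'' (valid for all large $n$) small enough that $c_2\delta^2Dk\le 4c_2\epsilon\log n<\tfrac14\log n$, each factor is at most $1-\tfrac1{2c_2}n^{-1/4}$, so
\[
\Pr[\text{success}]\ \le\ \Big(1-\tfrac1{2c_2}n^{-1/4}\Big)^{|T'|}\ \le\ \exp\!\big(-\Omega(n^{-1/4}|T'|)\big)\ =\ \exp\!\big(-\Omega(n^{1/4}/\log n)\big)\ =\ \exp(-n^{\Omega(1)}).
\]
In the regime $1/2k\le\delta\le1/4$, Lemma~\ref{lem:bigdelta} gives factors $\le 1-\tfrac1{2c_2}n^{-1/4}$ under $Q\le\epsilon n\log n/\delta$, and $|T'|=\Omega(n\delta/\log n)=\Omega(n^{1-o(1)}/\log n)$ since $\delta\ge 1/2k\ge n^{-o(1)}$, giving the same conclusion.

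I expect the main obstacle to be getting the decoupling exactly right: one must reveal precisely the right genie information and use an independent (not merely low-degree) set $T'$ so that, after conditioning, the residual problem is a genuine product of $|T'|$ independent single-vertex estimation problems — this is simultaneously what licenses the product bound on the success probability and what lets the concentration lemmas be applied with only $\le D$ summands per vertex. The one remaining non-routine check is that $|T'|$ times the per-vertex error probability is $n^{\Omega(1)}$, which is exactly where the hypotheses $\delta\ge n^{-1/4}$ (small-$\delta$ regime) and $\delta\ge 1/2k\ge n^{-o(1)}$ (large-$\delta$ regime) are used; everything else is bookkeeping, including the mild care needed for ties in the ML decoder and the degenerate cases $Q=0$ (trivial) and $Q<n/4$ (then $D=0$ and the hard vertices are isolated, which the same estimate covers since the empty sum satisfies $\Pr[\sum X_t\le 0]=1$).
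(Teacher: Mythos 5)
Your proposal is correct and proves the theorem. It uses the same two core probabilistic ingredients as the paper — picking a low-degree independent set of size $\Omega(n\delta^2 k/\log n)$, and lower-bounding each vertex's per-vertex confusion probability via the reverse Chernoff bounds of Lemmas~\ref{lem:smalldelta} and \ref{lem:bigdelta} — but it assembles them via a genuinely different route. The paper fixes the hard distribution to have $g(1)=0$, reduces the algorithm to a deterministic maximum-likelihood decoder $\Alg^*$, and then directly constructs, for each low-degree independent-set vertex $v$ with noise count $c^{f g}_v(1)\ge c^{fg}_v(0)$, an alternative labeling $g'$ (decrementing $g(v)$) that is at least as likely; this gives $2^{|I^*_{fg}|}$ equally-likely candidates, and a separate Chernoff bound is invoked to show $|I^*_{\randF\randG}|$ is large w.h.p. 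You instead reveal a genie (labels off $T'$ and all noise on edges not incident to $T'$), observe that because $T'$ is an independent set the residual problem factorizes exactly into $|T'|$ independent single-vertex hypothesis tests with at most $D$ observations each, and directly multiply the per-vertex Bayes success probabilities to get the exponential bound — no secondary concentration step is needed. This product/genie framing is a bit cleaner conceptually and gets to the $\exp(-n^{\Omega(1)})$ bound more directly, at the cost of a small amount of care in two places: (i) the conditional-independence claim hinges crucially on $T'$ being an independent set (which you flag), and (ii) the factor-$\tfrac12$ lower bound on the ML error from the event $N^{(v)}_0\le N^{(v)}_1$ relies on a symmetric treatment of ties in $\arg\max_a N^{(v)}_a$ (on that event the two posteriors for labels $0$ and $1$ are equal, so any decoder's success probability conditional on $N^{(v)}$ is at most $\max_a\Pr[g(v)=a\mid N^{(v)}]\le 1/2$ when the overall maximum is attained at $\{0,1\}$, and is $0$ otherwise); this is correct but deserves a sentence of justification. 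Overall the proposal is sound and the overall conclusion $\exp(-n^{\Omega(1)})$ matches the paper's.
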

Let $n$ be the number of vertices and consider a (possibly randomized) non-adaptive query algorithm $\Alg$, i.e. an algorithm that chooses the set of queries to make before seeing the results of the queries. Let $\eps$ be the success probability of $\Alg$, that is, for any latent function $g$, $\Alg$ recovers $g$ (up to a cyclic rotation of the labels) with probability at least $\eps$. Let $t$ be the number of queries made by $\Alg$. The choice of queries is allowed to be randomized. Our goal is to show that $\eps$ is small if $t$ is small.

\textbf{Hard Distribution.}
We start by defining a hard distribution. Let $\randG$ be a random latent function that assigns label $0$ to the first vertex and a uniform random and independently chosen label in $\{0,\dots,k-1\}$ to the remaining vertices. 

\textbf{Simplifying $\Alg$.}
Our first step is to simplify $\Alg$ for a cleaner analysis. Recall that a correct algorithm is allowed to return any cyclic rotation of the latent function $g$, i.e. any labeling that is equal to $g$ up to adding the same constant mod $k$ to all labels. Under our hard distribution $\randG$, we always have that the first vertex has label $0$. Therefore, we can define a new algorithm $\Alg^1$ which makes the same queries as $\Alg$, but when returning an assignment of labels, $\Alg^1$ takes the output of $\Alg$ and subtracts the label assigned by $\Alg$ to the first vertex from every single output label, mod $k$. In this way, for every $g \in \supp(\randG)$, we get that $\Alg^1$ returns $g$ whenever $\Alg$ is correct up to a cyclic rotation. That is, we now have an algorithm $\Alg^1$ that makes $t$ queries and has success probability $\eps$ for any $g \in \supp(\randG)$, even if we define success as returning the exact labeling (i.e. no cyclic shifts allowed). Our next simplifying step is to derandomize $\Alg^1$. By fixing the random coins of $\Alg^1$ (easy direction of Yao's principle), we obtain a deterministic algorithm $\Alg^2$ that makes $t$ non-adaptive queries and is correct with probability $\eps$ \emph{over the random choice of} $\randG$. Since $\Alg^2$ is deterministic and non-adaptive, we can let $E$ be the set of edges queried by $\Alg^2$ and let $\randF \in E \to \{0,\dots,k-1\}$ give the (random) results of the queries $E$. 

We wish to simplify $\Alg^2$ even further by making assumptions about the labeling it returns when seeing a set of answers $f \in \supp(\randF)$ to queries. Let $S$ denote the event that $\Alg^2$ is correct. Then 
\begin{eqnarray*}
\Pr[S] = \sum_{f \in \supp(\randF)} \Pr[\randF = f] \Pr[S \mid \randF=f].
\end{eqnarray*}
Since $\Alg^2$ is deterministic, it outputs a concrete labeling $\Alg^2(f)$ for any $f \in \randF$. Thus
\begin{eqnarray*}
\Pr[S \mid \randF =f] = \Pr[\randG = \Alg^2(f) \mid \randF=f].
\end{eqnarray*}
Let $G(f)$ be the collection of all maximum likelihood labelings $g \in \randG$ conditioned on $\randF =f$, i.e. $G(f)$ contains all $g \in \supp(\randG)$ such that $\Pr[\randG =g \mid \randF=f] \geq \Pr[\randG=g' \mid \randF=f]$ for all $g' \in \supp(\randG)$. The above allows us to conclude that if we define the algorithm $\Alg^*$ which makes the same queries as $\Alg^2$, but always returns a uniform random $g \in G(\randF)$, then $\Alg^*$'s success probability is at least $\eps$. This completes our simplifying steps and we will show that $\Alg^*$ has small success probability if $t$ is small.

\textbf{Performance of $\Alg^*$.}
To prove that $\Alg^*$ has low success probability if it makes few queries, we will show that there is a good chance that the correct labeling $\randG$ is not the maximum likelihood estimate after seeing $\randF$. For this, consider a vertex $v$ different from the first vertex and let $E_v$ be the subset of edges from $E$ that have $v$ as an end point. Since each edge in $E$ has two end points, there must be a set $W$ of at least $n/2$ vertices that have $|E_v| \leq 4t/n$. We form an independent set $I$ from $W$ by repeatedly selecting one vertex $v$ from $W$ and adding it to an initially empty $I$. We then remove all vertices incident to $v$ from $W$. Since each $v$ removes at most $4t/n$ other vertices from $W$, we are left with an $I$ of size at least $(n-1)/(4t/n+1)$. The reason why we choose $I$ as an independent set is that it implies that the queries corresponding to edges incident to a node $v \in I$ are independent of the queries incident to any other node $w \in I$.

Now let $f \in \supp(\randF)$ be an assignment to the edges and let $g \in \supp(\randG)$ be a classification of the vertices. Define from $f$ and $g$ the noise on edge $(u,v) \in E_v$ as $\eta^{fg}_{uv}=(g(u)-g(v)-f(u,v)) \bmod k$. For each $i \in \{0,\dots,k-1\}$, define $c^{fg}_v(i)$ as the number of edges $(u,v)$ incident to $v$ for which $\eta^{fg}_{uv} = i$. Define the subset $I^*_{fg} \subseteq I$ containing all vertices $v \in I$ such that $c^{fg}_v(1) \geq c^{fg}_v(0)$. We claim that there are at least $2^{|I^*_{fg}|}$ distinct labelings $g' \in \supp(\randG)$ that all have $\Pr[\randG = g' \mid \randF = f] \geq \Pr[\randG=g \mid \randF=f]$. To see this, consider any labeling $g'$ where $g'(v)=g(v)$ for $v \notin I^*_{fg}$ and either $g'(v) = g(v)-1$ or $g'(v) = g(v)$ for $v \in I^*_{fg}$. There are $2^{|I^*_{fg}|}$ such $g'$. We will prove that $\Pr[\randG = g' \mid \randF = f] \geq \Pr[\randG = g \mid \randF = f]$. For a classification $g \in \supp(\randG)$ and assignment to the edges $f \in \supp(\randF)$, define $E_{fg}^+$ as the subset of edges for which $(g(u)-g(v) - f(u,v)) \bmod k = 0$ and let $E_{fg}^- = E \setminus E_{fg}^+$. Since the noises on the edges are independent, it follows that 


\begin{align*}
\Pr[\randF = f \mid \randG=g] &= \left(\frac{1}{k} + \delta\right)^{|E_{fg}^+|}\left(\frac{1}{k} - \frac{\delta}{k-1}\right)^{|E_{fg}^-|}
\end{align*}

Comparing $g'$ and $g$, we notice that all edges $(u,w)$ with $v \notin \{u,w\}$ contribute the same to $\Pr[\randF = f \mid \randG = g]$ and $\Pr[\randF = f \mid \randG = g']$. However, for $g'$, it holds that any edge where $g(v)-g(u) \bmod k = 1$ we now have $g'(v)-g'(u) \bmod k = g(v) - 1 - g(u) \bmod k = 0$. Hence $c^{fg'}_v(0) = c^{fg}_v(1) \geq c^{fg}_v(0)$. It follows that $\Pr[\randF = f \mid \randG=g'] \geq \Pr[\randF = f \mid \randG=g]$. Using Bayes' theorem, we get
\begin{eqnarray*}
\Pr[\randG = g \mid \randF=f] = \frac{\Pr[\randF = f \mid \randG=g] \Pr[\randG=g]}{\Pr[\randF=f]}.
\end{eqnarray*}
and
\begin{eqnarray*}
\Pr[\randG = g' \mid \randF=f] = \frac{\Pr[\randF = f \mid \randG=g'] \Pr[\randG=g']}{\Pr[\randF=f]}.
\end{eqnarray*}
Since $\randG$ is uniform over its support, we have $\Pr[\randG=g] = \Pr[\randG=g']$. Hence we conclude that 
\begin{eqnarray*}
\Pr[\randG = g' \mid \randF = f] \geq \Pr[\randG = g \mid \randF = f]
\end{eqnarray*}
as claimed.

The above implies that $\Alg^*$ outputs $g$ with probability at most $2^{-|I^*_{fg}|}$ when it sees the query answers $f$. Indeed, if there is even a single $g'$ with $\Pr[\randG = g' \mid \randF = f] > \Pr[\randG = g \mid \randF = f]$, then $\Alg^*$ never outputs $g$, and otherwise, $\Alg^*$ outputs a uniform random labeling among the $2^{|I^*_{fg}|}$ candidates. To upper bound the succes probability of $\Alg^*$, we thus argue that $I^*_{\randF \randG}$ is large with high probability when $t$ is small.

Assume first that $1/n^{1/4} \leq \delta \leq 1/2k$ and $k \leq n^{o(1)}$. Using Lemma~\ref{lem:smalldelta}, each $v \in I$ is included in $I^*_{\randF \randG}$ with probability at least $c_2^{-1} \exp(-\delta^2 |E_v| k c_2)$ for a constant $c_2>0$. Since $|E_v| \leq 4t/n$, it follows that for $t = o((n \lg n)/(k \delta^2))$, $v$ will appear in $I^*_{\randF \randG}$ with probability at least $n^{-o(1)}$. Furthermore, $|I| \geq (n-1)/(4t/n + 1)  = \Omega(\delta^2nk / \lg n) = \Omega(n^{1/3})$. Moreover, these events are independent for different $v \in I$ since $I$ forms an independent set. A Chernoff bound implies that $|I^*_{\randF \randG}| \geq c_2^{-1} \exp(-\delta^2 |E_v| k c_2) |I|/2 \geq n^{1/3-o(1)}$ with probability at least $1-\exp(-n^{1/3-o(1)})$. When this event $B$ happens, the conditional success probability is no more than $\exp(-n^{1/3-o(1)})$. Hence the overall success probability is at most $\exp(-n^{1/3-o(1)})\Pr[B] + (1-\Pr[B]) = \exp(-n^{\Omega(1)})$.

Assume next that $1/2k \leq \delta \leq 1/4$ and $k \leq n^{o(1)}$. Using Lemma~\ref{lem:bigdelta}, each $v \in I$ occurs in $I^*_{\randF \randG}$ with probability at least $c_2^{-1} \exp(-\delta |E_v| c_2)$ for a constant $c_2 > 0$. Since $|E_v| \leq 4t/n$, it follows that for $t = o((n \lg n)/\delta)$, $v$ will appear in $I^*_{\randF \randG}$ with probability at least $n^{-o(1)}$. We also have $|I| \geq (n-1)/(4t/n + 1)  = \Omega(\delta n/ \lg n) = \Omega(n^{1/3})$. A Chernoff bound like above concludes that the success probability is no more than $\exp(-n^{\Omega(1)})$.

\subsection{Concentration Inequalities}
\label{sec:concentration}
In this section, we prove the two concentration inequalities stated in Section~\ref{sec:prelim}. Our proofs use the standard Chernoff bounds as well as the following ``reverse'' Chernoff bound:
\begin{theorem}[\cite{reversechernoff}]
Let $C_1,\dots,C_m$ be i.i.d. $0/1$ random variables with $\Pr[C_i = 1]=p$. For $p \leq 1/2$ and for any $0 \leq t \leq m(1-2p)$ it holds that:
$$
\Pr[\sum_{i=1}^m C_i \geq t + pm] \geq \frac{1}{4}\exp(-2t^2/pm).
$$
\end{theorem}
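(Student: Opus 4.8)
The plan is to change measure so that the target event becomes a \emph{central} deviation, and then to recognise that the constants $\tfrac14$ and $2$ are precisely those in the Gaussian tail bound $\Pr[N(0,1)\ge x]\ge\tfrac14 e^{-x^2}$ once one accounts for $1/(p(1-p))\le 2/p$. Concretely, write $X=\sum_{i=1}^m C_i$ and set $q=p+t/m$; the hypothesis $0\le t\le m(1-2p)$ is exactly the statement $p\le q\le 1-p$, so $q\in[p,1-p]$ and $q(1-q)\ge p(1-p)\ge p/2$. Under the tilted law in which the $C_i$ are i.i.d.\ Bernoulli$(q)$ we have $\Mean{X}=mq=mp+t$, so $\Pr_p[X\ge mp+t]=\Pr_p[X\ge mq]$ is a near-median probability for $\mathrm{Bin}(m,q)$. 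The first key computation is the likelihood ratio $\Pr_p[X=j]/\Pr_q[X=j]=(p/q)^j\big((1-p)/(1-q)\big)^{m-j}$, which at $j=mq$ equals $e^{-mD(q\|p)}$, together with the estimate $mD(q\|p)\le t^2/(pm)$: by Taylor, $D(q\|p)=(q-p)^2/(2\xi(1-\xi))$ for some $\xi\in[p,q]\subseteq[p,1-p]$, and $\xi(1-\xi)\ge p(1-p)\ge p/2$ gives $mD(q\|p)\le m(t/m)^2/p=t^2/(pm)$ --- this is the source of the factor $2$.

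From here there are two ways to finish, and I would present the second. The self-contained route localises the change of measure: for a window width $r=\Theta(\sqrt{mq(1-q)})=\Theta(\sqrt{mp})$,
\[
\Pr_p[X\ge mp+t]\ \ge\ \Pr_q\!\big[\,mq\le X\le mq+r\,\big]\cdot\min_{mq\le j\le mq+r}\Big(\tfrac pq\Big)^{j}\Big(\tfrac{1-p}{1-q}\Big)^{m-j},
\]
where the first factor is an absolute constant (Chebyshev on the right end, and the fact that the median of $\mathrm{Bin}(m,q)$ lies within $1$ of $mq$ on the left), and the second is at least $e^{-mD(q\|p)}\cdot\big(\tfrac{q(1-p)}{p(1-q)}\big)^{-\Theta(r)}$; since $\ln\tfrac{q(1-p)}{p(1-q)}=\ln(1+\tfrac t{pm})+\ln(1+\tfrac t{m(1-q)})\le\tfrac t{pm}+\tfrac t{m(1-q)}\le\tfrac{2t}{pm}$ (using $1-q\ge p$), the extra exponent is $\Theta(t/\sqrt{pm})$, absorbed into $\Theta(t^2/(pm))$ once $t\gtrsim\sqrt{pm}$. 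The clean route, which I would actually use, is Slud's inequality (applicable because $mp\le mp+t\le m(1-p)$): $\Pr[\mathrm{Bin}(m,p)\ge mp+t]\ge\Pr[N(0,1)\ge t/\sqrt{mp(1-p)}]$; combining it with the elementary bound $\Pr[N(0,1)\ge x]\ge\tfrac14 e^{-x^2}$ for all $x\ge0$ --- equality-to-$\tfrac14$ at $x=0$, and $x\mapsto\Pr[N(0,1)\ge x]-\tfrac14 e^{-x^2}$ strictly decreasing because $xe^{-x^2/2}\le e^{-1/2}<\sqrt{2/\pi}$ --- and with $1/(p(1-p))\le2/p$ yields exactly $\Pr[X\ge mp+t]\ge\tfrac14\exp(-2t^2/(pm))$.

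The delicate point is the genuinely Gaussian regime $t\lesssim\sqrt{pm}$: there the tilting cross-term is only $\Theta(1)$ rather than $o(t^2/(pm))$, so the explicit computation above gives merely $\tfrac14 e^{-O(1)}$, and one needs a \emph{universal} constant lower bound on the central deviation $\Pr[X\ge mp+t]$ to match the (then also $\Theta(1)$) right-hand side --- this is exactly what Slud's inequality supplies, which is why I prefer that route. The only other wrinkle is integrality: the event is really $\{X\ge\lceil mp+t\rceil\}$, so applying Slud costs a harmless ``$+1$'' in the threshold, absorbed by the slack in the constant $2$ (and in the very lossy Gaussian tail bound). As with any reverse-Chernoff statement, the inequality is also meant for $m$ not pathologically small --- for $m=1$, $p=1/5$, $t=0$ the literal bound already fails --- which is automatic in all the intended applications.
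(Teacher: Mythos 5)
The paper does not actually prove this statement; it is imported as a black box from Mousavi's note \cite{reversechernoff}, whose own argument is a direct Stirling-type lower bound on a single central binomial term $\Pr[X=\lceil mp+t\rceil]$ --- essentially your Route~1 (exponential tilting plus a bound on $mD(q\,\|\,p)$). Your preferred Route~2 is therefore a genuinely different decomposition from the cited source: you outsource all the combinatorics to Slud's inequality, which compares the binomial upper tail directly to a Gaussian tail, and finish with the elementary bound $\Pr[N(0,1)\ge x]\ge\tfrac14 e^{-x^2}$ together with $1/(p(1-p))\le 2/p$. This is cleaner in exactly the way you say: in the regime $t\lesssim\sqrt{pm}$, where a Stirling computation only yields a constant and one must separately argue that constant is at least $\tfrac14$, Slud gives the right universal constant for free. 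Your monotonicity proof of the Gaussian tail bound (via $xe^{-x^2/2}\le e^{-1/2}<\sqrt{2/\pi}$, so $\Pr[N\ge x]-\tfrac14 e^{-x^2}$ decreases from $\tfrac14$ to $0$) is correct, and the chain of constants producing the ``$2$'' is right.

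The one step that is not actually closed, and which you flag yourself, is the rounding. Slud applies at the integer threshold $k=\lceil mp+t\rceil$, and $k-mp$ can be as large as $t+1$ (the hypothesis $k\le m(1-p)$ can also fail right at $t=m(1-2p)$). Pushing $s:=k-mp\le t+1$ through your chain gives $\tfrac14\exp\!\big(-(t+1)^2/(mp(1-p))\big)$, and for $p$ near $\tfrac12$ there is no slack left in $1/(1-p)\le 2$, so one would need $(t+1)^2\le t^2$, which never holds; the ``lossy Gaussian tail'' does provide extra room, but only once $s/\sigma$ is bounded away from $0$, and the assertion that the $+1$ is uniformly ``absorbed'' needs to be cashed out (e.g.\ split on $s/\sigma\le 1$, where $\Pr[N\ge s/\sigma]\ge\Pr[N\ge 1]>\tfrac14$, versus $s/\sigma>1$, where a sharper Gaussian lower bound with exponent $-x^2/2$ can be traded against the $+1$). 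As you correctly observe, this cannot be repaired without a largeness assumption anyway --- $m=1$, $p=1/5$, $t=0$ falsifies the literal statement --- and the paper only ever invokes the bound with $m=\Theta(\lg n/(\delta^2 k))\to\infty$, where the shift is harmless. So: sound high-level strategy, a different and arguably cleaner route than the cited source supplies, but the integrality step needs to be made precise rather than gestured at.
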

We start by proving Lemma~\ref{lem:smalldelta}:
\begin{proof}[Proof of Lemma~\ref{lem:smalldelta}]
For $0 \leq \delta \leq 1/2k$ and $n \geq k/2$, we first upper bound $\Pr[\sum_i X_i \leq 0]$. Let $Y_i$ take the value $1$ if $X_i$ takes the value $1$ and $0$ otherwise. Let $Z_i$ take the value $1$ if $X_i=-1$ and $0$ otherwise. By a Chernoff bound with $(1-\eps)(1/k+\delta)=(1/k+\delta/2) \Rightarrow \eps = \delta/(2(1/k+\delta))$, we get $\Pr[\sum_i Y_i \leq (1/k + \delta/2)n] \leq \exp(-\eps^2(1/k+\delta)n/2)$. This is at most $\exp(-\delta^2n/(8(1/k+\delta))) \leq \exp(-\delta^2nk/8)$. Similarly, a Chernoff bound with $(1+\eps)(1/k-\delta/(k-1))=(1/k+\delta/2) \Rightarrow \eps = (\delta/2 + \delta/(k-1))/(1/k - \delta/(k-1)) \geq \delta/(2(1/k-\delta/(k-1)))$, we have $\Pr[\sum_i Z_i \geq (1/k+\delta/2)n] \leq \exp(-\eps^2 (1/k-\delta/(k-1))n/3) \leq \exp(-\delta^2n/(12(1/k-\delta/(k-1)))) \leq \exp(-\delta^2 nk/12)$. A union bound gives $\Pr[\sum_i X_i \leq 0] \leq 2\exp(-\delta^2 nk/12)$. If $n \leq k/2$, then $\delta^2 n k < 1$ and the statement follows trivially since there is a constant $c_1$ making $c_1 \exp(-\delta^2 nk/c_1)$ greater than or equal to $1$.

To lower bound $\Pr[\sum_i X_i  \leq 0]$, let $W = \sum_i (Y_i + Z_i)$. Conditioned on $W = m$, we have that $\sum_i X_i$ is distributed as the sum of $m$ i.i.d. random variables taking the value $1$ with probability $(1/k+\delta)/(2/k + \delta k/(k-1)) \leq 1/2   + \delta k/2$ and the value $-1$ with probability at least $1/2-\delta k /2$. We will use the following ``reverse'' Chernoff bound:

Conditioned on $W=m > 0$, we ask what is the probability that we see at least $\lceil m/2\rceil$ $-1$'s, i.e. $\sum_i X_i \leq 0$. Fixing $t = \lceil m/2\rceil - (1/2 -\delta k/2 )m \leq \delta k m/2+ 1$ we see that
\hide{ 
\begin{eqnarray*}
\Pr[\sum_i X_i \leq 0 \mid W = m] &\geq& \\
\frac{1}{4}\exp(-2(\delta k m/2 + 1)^2/(1/2-\delta k /2)m) &\geq& \\
\frac{1}{4}\exp(-8(\delta k m/2 + 1)^2/m) &\geq& \\
\frac{1}{4}\exp(-16(\delta^2 k^2 m/4 + 1)).
\end{eqnarray*}
}

\begin{align*}
\Pr[\sum_i X_i \leq 0 \mid W = m] &\geq \frac{1}{4}\exp(-2(\delta k m/2 + 1)^2/(1/2-\delta k /2)m)  \geq \\ 
\frac{1}{4}\exp(-8(\delta k m/2 + 1)^2/m) &\geq \frac{1}{4}\exp(-16(\delta^2 k^2 m/4 + 1)).
\end{align*}
Using that $\Mean{W} = (2/k + \delta k/(k-1))n$, Markov's inequality gives us that $W \leq (4/k + 2 \delta k/(k-1))n \leq 8n/k$ with probability at least $1/2$. We also have $\sum_i X_i = 0$ when $\sum_i W_i = 0$. Hence

\begin{align*}
\Pr[\sum_i X_i \leq 0] &\geq \frac{1}{8}\exp(-32(\delta^2 kn + 1))  \geq  \frac{1}{8 \cdot e^{-32}}\exp(-32\delta^2 nk). 
\end{align*}

\hide{
\begin{eqnarray*}
\Pr[\sum_i X_i \leq 0] &\geq& \\
\frac{1}{8}\exp(-32(\delta^2 kn + 1)) &\geq& \\
\frac{1}{8 \cdot e^{-32}}\exp(-32\delta^2 nk). 
\end{eqnarray*}
}
\end{proof}

\noindent Next we prove Lemma~\ref{lem:bigdelta}:

\begin{proof}[Proof of Lemma~\ref{lem:bigdelta}]
We start by upper bounding $\Pr[\sum_i X_i \leq 0]$. Let $Y_i$ take the value $1$ if $X_i$ takes the value $1$ and $0$ otherwise. Let $Z_i$ take the value $1$ if $X_i=-1$ and $0$ otherwise. A Chernoff bound gives 

$$\Pr[\sum_i Y_i \leq (1/k + \delta/2)n] \leq \exp(-\delta^2n/(8(1/k+\delta))) \leq \exp(-\delta n/8).$$

\noindent Similarly, we have 

\begin{align*} 
\Pr[\sum_i Z_i \geq (1/k+\delta/2)n] &\leq \exp(-\delta^2n/(12(1/k-\delta/(k-1)))) \leq  \exp(-\delta^2n/(12(1/k-1/3(k-1)))) \\ 
&\leq \exp(-\delta^2n/(12(1/k-2/3k))) = \exp( -\delta^2kn/4) \leq \exp(-\delta n/8)
\end{align*}

\noindent A union bound gives $\Pr[\sum_i X_i \leq 0] \leq 2\exp(-\delta n /8)$.  To lower bound $\Pr[\sum_i X_i \leq 0]$, first notice that 

\begin{align*} 
\Pr[\sum_i Y_i =0] &= (1-1/k-\delta)^n \geq (1-3\delta)^n = \exp(-n \sum_{j=1}^\infty (3\delta)^j/j) \geq  \\ 
& \geq \exp(-n(3 \delta) \sum_{j=0}^\infty (3/4)^j) = \exp(-12 \delta n).
\end{align*}

\noindent We conclude that $\Pr[\sum_i X_i \leq 0] \geq \exp(-12 \delta n)$.
\end{proof}
 
\section{Conclusion}
\label{sec:concl} 
In this work we provide an optimal algorithm both in terms of running time and query complexity for  the problem of  learning joint alignments with a faulty oracle. The algorithm is simple and performs well in practice compared to previous work. An interesting open problem is to explore whether there exists an adaptive algorithm with better query complexity.  Finally, a remaining open question from 
Chen and Cand\'{e}s  is whether we can characterize the performance of existing joint alignment algorithms if one is satisfied with approximate solutions.

\newcommand{\etalchar}[1]{$^{#1}$}

\end{document}